\newcounter{clemma}
\newtheorem{lemma}[clemma]{Lemma}
\newcounter{ctheorem}
\newtheorem{theorem}[ctheorem]{Theorem}
\begin{document}
%
\title{CSIT Sharing over Finite Capacity Backhaul for Spatial Interference Alignment}


\author{\IEEEauthorblockN{Mohsen Rezaee$^\dagger$, Maxime Guillaud$^\dagger$, Fredrik Lindqvist$^\ddagger$}
\IEEEauthorblockA{
$^\dagger$Institute of
Telecommunications, Vienna University of Technology, \\Vienna, Austria -- e-mail: {{\small \texttt{\{mohsen.rezaee,guillaud\}@tuwien.ac.at}}}} 
$^\ddagger$Ericsson Research, Ericsson AB, Sweden -- e-mail: {{\small \texttt{fredrik.lindqvist@ericsson.com}}}
}

%
%


%


\maketitle

\begin{abstract}
Cellular systems that employ time division duplexing (TDD) transmission are good candidates for implementation of interference alignment (IA) in the downlink since channel reciprocity enables the estimation of the channel state by the base stations (BS) in the uplink phase. However, the interfering BSs need to share their channel estimates via backhaul links of finite capacity. A quantization scheme is proposed which reduces the amount of information exchange (compared to conventional methods) required to achieve IA in such a system. The scaling (with the transmit power) of the number of bits to be exchanged between the BSs that is sufficient to preserve the multiplexing gain of IA is derived. \footnote{This work was supported by the FP7 project HIATUS (grant 265578) of the European Commission and by the Austrian Science Fund (FWF) through grant NFN SISE (S106).}
\end{abstract}


%
\IEEEpeerreviewmaketitle

\section{Introduction}
\label{sec:intro}

Interference alignment (IA) is known to achieve the optimal degree of freedom (DoF) in interference channel (IC). This implies that at high signal to noise ratio (SNR) regime, IA improves the system throughput compared to the conventional orthogonal medium-sharing methods. However, implementation of IA in existing systems faces a lot of challenges. The necessity of channel state information (CSI) at the transmitters is one of the major issues which is not practical in many situations. Moreover, the accuracy of the CSI provided to the transmitters should increase as the power increases in order to guarantee the DoF gains promised by IA \cite{bolski}. Therefore transmission systems which acquire the CSI through feedback (such as frequency division duplex (FDD) systems) become less favorable for implementation of IA since the potential gains only appear at high powers.

For time division duplex (TDD) systems, every base station can estimate its downlink channels from the uplink transmission phase thanks to reciprocity. However, this local CSI is not sufficient, and the BSs need to share their channel estimates which can be carried out through backhaul links between BSs. These backhaul links generally have limited capacity, which should be exploited efficiently.

In scenarios where the receivers quantize and feed the CSI back to the transmitters, the problem is explored over frequency selective channels for single-antenna users in \cite{bolski} and for multiple-antenna users in \cite{rajac}. Both references provide DoF-achieving quantization schemes and establish the required scaling of the number of feedback bits. For alignment using spatial dimensions, \cite{itw} provides the scaling of feedback bits to achieve IA in MIMO IC. For the broadcast channel, the scaling of the feedback bits was characterized in \cite{Jindal}. 
In \cite{Wiro}, quantization of the precoding matrix using random vector quantization (RVQ) codebooks is investigated which provides insights on the asymptotic optimality of RVQ. 
From another point of view, \cite{ImPerCh:Guill} provides an analysis of the effect of imperfect CSI on the mutual information of the interference alignment scheme. 
\\
\indent In this paper, we focus on the scenario where the BSs have perfect but local CSI, and must share it to achieve IA. A CSIT sharing scheme is proposed which reduces the amount of information exchange required for interference alignment in such a system. The scaling (with the transmit power) of the number of bits to be transferred which is sufficient to preserve the multiplexing gain that can be achieved using perfect CSI is derived. Moreover, a heuristic method is proposed to demonstrate the achievability of the DoF by simulations.

{\it Notation:} Boldface lowercase and uppercase letters indicate vectors and matrices, respectively. ${\bf I}_N$ is the $N\times N$ identity matrix. The trace, conjugate, Hermitian transpose of a matrix or vector are denoted by ${\rm tr}(\cdot),(\cdot)^*, (\cdot)^{\rm H}$ respectively. The expectation operator is represented by ${\rm E(\cdot)}$. The Frobenius norm and the determinant of a matrix are denoted by $||\cdot ||_{\rm F}$ and $|\cdot |$ respectively. The maximum eigenvalue of a matrix is represented by $\lambda_{\rm max}(\cdot)$. A diagonal (resp. block diagonal) matrix is denoted by ${\rm diag(\cdot)}$ (resp. ${\rm Bdiag}(\cdot)$) with the argument elements (resp. blocks) on its diagonal. $\mathcal{N}(0,1)$ (resp. $\mathcal{CN}(0,1)$) denotes the real (resp. circularly symmetric complex) Gaussian distribution with zero mean and unit variance.
\section{System Model}
\label{sec:model}
An interference channel is considered in which $K$ base stations (BS) and $K$ users (one user in each cell) are considered as transmitters and receivers, respectively.  
For the sake of simplicity of the exposition, we focus on the symmetric case, and assume that each BS has $M$ antennas while each user is equipped with $N$ antennas. These results trivially generalize to non-homogeneous antenna numbers and per-user DoF as long as IA is feasible for the chosen problem dimensions.
Each BS employs a linear precoder to transmit $d$ data streams to its user. 
The received signal at user $i$ is denoted by
\begin{equation}\label{E1}
	{{\bf{y}}_i} = {{{{\bf H}}}_{ii}}{{{\bf {\bf V}}}_i}{{\bf{x}}_i} + \sum_{j=1, j \ne i}^K{{{{{\bf H}}}_{ij}}{{{{\bf {\bf V}}}}_j}{{\bf{x}}_j}}  + {{\bf{n}}_i}
\end{equation}
in which ${{{{\bf H}}}_{ij}} \in {\mathbb{C}^{N \times M}}$ is the channel matrix between BS $j$ and user $i$, ${{{{\bf {\bf V}}}}_{j}} \in {\mathbb{C}^{M \times d}}$ and ${{\bf{x}}_{j}} \in {\mathbb{C}^{d }}$ are the precoding matrix and the data vector of BS $j$, respectively. Furthermore, ${{\bf{n}}_{i}}$ is the additive noise at user $i$ whose entries are distributed according to $\mathcal{CN}(0,1)$. Assuming ${\rm {E}}\left(  {{\bf{x}}_j}{\bf{x}}_j^{\rm H} \right) = {\frac{P}{d}}{\bf I}_{d},{\rm{ }}\,\,\,j = 1, \ldots ,K$ and using truncated unitary precoders, the transmit power for each BS is equal to $P$. We further assume that the elements of the data symbol are i.i.d. Gaussian random variables. The channels are assumed to be generic \cite{yetisIA}; in particular, this includes channels with entries drawn independently from a continuous distribution.\\

\section{CSIT sharing for IA}
\label{sec:IA}

Let us consider TDD transmission, which enables the BSs to estimate their channels toward different users by exploiting the reciprocity of the wireless channel. Specifically, we assume that the $j$th BS estimates the channel matrices ${\bf H}_{ij}, \,\,i=1,\ldots,K,\, i\neq j$ (denoted by \emph{local CSI}) from the uplink phase, via reciprocity. We first assume that local CSI is known perfectly at BS $j$. However, global CSI (excluding the direct channels ${\bf H}_{ii}$) is required in order to design IA precoders. In this section we consider the topology of CSI exchange in the network, and work under the assumption that perfect local CSI is conveyed from each BS to a processing node which computes all precoders and provides them to the BSs.

Here we assume a feasible IA setting \cite{Gou_Jafar_DoF_MIMO_Kuser_IC_IT2010}, i.e. there exist precoding matrices ${ {\bf V}}_j ,\, j=1,...,K$ and projection matrices ${\bf U}_i \in {\mathbb{C}^{N \times d }}, \,i=1,...,K$ such that
\begin{eqnarray}\label{inoe}
&{\bf U}_i^{\rm H}{\bf H}_{ij}{\bf {\bf V}}_j={\bf 0} \,\,\, \,\  \,\,\,\,\, \forall i,j \in \{1,...,K\}, \,\, j \neq i,  
\\
&{\rm rank}({\bf U}_i^{\rm H}{\bf H}_{ii}{\bf V}_i)=d.  
\end{eqnarray}
Condition \eqref{inoe} can be rewritten as
\begin{equation}\begin{split}
{\bf U}_{-j}^{\rm H}{\bf H}_{j}{\bf V}_j={\bf 0}  \,\,\, \,\,\, \forall j \in \{1,...,K\},
\end{split}\end{equation}
in which ${\bf U}_{-j}={\rm Bdiag}({\bf U}_1,\ldots,{\bf U}_{j-1},{\bf U}_{j+1},\ldots,{\bf U}_K)$ and ${\bf H}_j=[{\bf H}_{1,j}^{\rm H},...,{\bf H}_{j-1,j}^{\rm H},{\bf H}_{j+1,j}^{\rm H},...,{\bf H}_{K,j}^{\rm H}]^{\rm H}$  is a $(K-1)N \times M$ matrix.

We will further assume that $(K-1)N > M$, which represents the cases where transmitter-side zero-forcing is not enough to eliminate all interference, and therefore IA is required. The following lemma highlights the intuition behind our CSI sharing scheme. 
\begin{lemma}\label{lemma_Grassmannian_feedback}
In order to design IA precoders, it is sufficient that each BS $j$ sends a point on the Grassmann manifold ${\mathcal G}_{(K-1)N,M}$ representing the column space of ${\bf H}_j$ to the IA processing node. 
\end{lemma}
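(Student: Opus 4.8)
The plan is to show that the IA feasibility conditions \eqref{inoe} can be recast so that each stacked channel $\mathbf{H}_j$ enters \emph{only} through its column space, and that the actual precoders can then be reconstructed locally at each BS from this information together with the local CSI. The starting observation is that, since the channels are generic and $(K-1)N > M$, the matrix $\mathbf{H}_j$ has full column rank $M$ almost surely, so $\mathrm{colspace}(\mathbf{H}_j)$ is a well-defined $M$-dimensional subspace of $\mathbb{C}^{(K-1)N}$, i.e. a point of ${\mathcal G}_{(K-1)N,M}$. Fix any thin factorization $\mathbf{H}_j = \mathbf{Q}_j \mathbf{R}_j$ (for instance a QR decomposition), where $\mathbf{Q}_j \in \mathbb{C}^{(K-1)N \times M}$ has orthonormal columns spanning $\mathrm{colspace}(\mathbf{H}_j)$ and $\mathbf{R}_j \in \mathbb{C}^{M \times M}$ is invertible. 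The matrix $\mathbf{Q}_j$ is a representative of the Grassmannian point; any other representative differs from it by a right unitary factor, which will be harmless.

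The key step is the change of variables $\mathbf{V}_j = \mathbf{R}_j^{-1}\widetilde{\mathbf{V}}_j$. Since $\mathbf{H}_j\mathbf{V}_j = \mathbf{Q}_j\widetilde{\mathbf{V}}_j$, the interference-cancellation condition $\mathbf{U}_{-j}^{\rm H}\mathbf{H}_j\mathbf{V}_j = \mathbf{0}$ becomes $\mathbf{U}_{-j}^{\rm H}\mathbf{Q}_j\widetilde{\mathbf{V}}_j = \mathbf{0}$, which depends on the channel of BS $j$ only through $\mathbf{Q}_j$, hence only through the point of ${\mathcal G}_{(K-1)N,M}$. Because $\mathbf{R}_j$ is invertible, $\mathbf{V}_j \mapsto \widetilde{\mathbf{V}}_j = \mathbf{R}_j\mathbf{V}_j$ is a rank-preserving bijection on the set of admissible precoders, so a family $(\mathbf{U}_i,\widetilde{\mathbf{V}}_j)$ solving the transformed zero-forcing equations exists if and only if the original conditions are solvable; the assumed feasibility of the IA setting guarantees this.

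This yields the protocol claimed by the lemma: each BS $j$ sends only $\mathrm{colspace}(\mathbf{H}_j)$ to the processing node, which runs any IA algorithm on the matrices $\mathbf{Q}_j$ to obtain receive filters $\mathbf{U}_i$ and whitened precoders $\widetilde{\mathbf{V}}_j$, and returns $\widetilde{\mathbf{V}}_j$ to BS $j$. BS $j$ then forms $\mathbf{V}_j = \mathbf{R}_j^{-1}\widetilde{\mathbf{V}}_j$ from its locally known $\mathbf{R}_j$ and, if a truncated unitary precoder is desired, orthonormalizes its columns without changing $\mathrm{colspace}(\mathbf{V}_j)$ and hence without affecting \eqref{inoe}. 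The only remaining requirement, $\mathrm{rank}(\mathbf{U}_i^{\rm H}\mathbf{H}_{ii}\mathbf{V}_i)=d$, involves the direct channel $\mathbf{H}_{ii}$, which BS $i$ already knows locally and which does not need to be shared; since $\mathbf{H}_{ii}$ is generic and independent of the cross channels that determine $\mathbf{U}_i$ and $\mathbf{V}_i$, this rank condition holds automatically.

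I expect the main point requiring care to be the equivalence used in the second step: one must verify that passing from $\mathbf{V}_j$ to $\widetilde{\mathbf{V}}_j$ is a genuine equivalence of \emph{feasibility} and does not silently drop the rank-$d$ desired-signal constraint, and that the whole argument is insensitive both to the choice of representative $\mathbf{Q}_j$ of the Grassmannian point and to the measure-zero event that some $\mathbf{H}_j$ is rank-deficient. All three are handled by the invertibility of $\mathbf{R}_j$ together with the genericity assumption on the channel, but they are precisely the places where a careless statement of "sufficiency" could break.
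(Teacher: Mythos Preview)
Your proof is correct and follows essentially the same approach as the paper: both factor $\mathbf{H}_j=\mathbf{Q}_j\mathbf{R}_j$ (the paper writes $\mathbf{F}_j\mathbf{C}_j$), observe that the alignment condition $\mathbf{U}_{-j}^{\rm H}\mathbf{H}_j\mathbf{V}_j=\mathbf{0}$ becomes $\mathbf{U}_{-j}^{\rm H}\mathbf{Q}_j\widetilde{\mathbf{V}}_j=\mathbf{0}$ under the change of variables $\mathbf{V}_j=\mathbf{R}_j^{-1}\widetilde{\mathbf{V}}_j$, and conclude that BS~$j$ can recover the true precoder locally from $\widetilde{\mathbf{V}}_j$ and its own $\mathbf{R}_j$. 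The paper additionally carries an explicit unitary $\mathbf{O}_j$ to track the representative ambiguity (anticipating the codebook setting of Section~\ref{sec:LF}), while you handle this and the rank condition more verbally; the substance is the same.
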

\begin{proof}
Let ${\bf F}_j$ denote a $(K-1)N \times M$ matrix containing an orthonormal basis of the column space of ${\bf H}_j$, i.e.  ${\bf H}_j={\bf F}_j{\bf C}_j$ for some ${\bf C}_j$ (invertible almost surely for generic channels). According to our assumption that only the column space of ${\bf H}_i$ is known at the central unit, we can assume that the central unit has only access to a rotated version of ${\bf F}_j$, i.e., ${\bf F}_j{{\bf O}}_j$ for some unknown unitary matrix ${{\bf O}}_j$. 
We now show that alignment can be achieved based on the knowledge of ${{\bf F}_j{\bf O}_j}$ rather than of ${\bf H}_j$. Let us assume that the processing node designs a set $(\{{\tilde {\bf U}}_j\}_{j=1}^K,\{{\tilde {\bf V}}_j\}_{j=1}^K)$ of IA transmit precoders and receive projection filters for the channels $\{{{\bf F}_j{ {\bf O}}_j}\}_{j=1}^K$. Then,
\begin{eqnarray}
{\tilde {\bf U}}_{-j}^{\rm H}({\bf F}_j{ {\bf O}}_j){\tilde {\bf V}}_j={\bf 0}  & \!\! \Rightarrow & \!\! {\tilde {\bf U}}_{-j}^{\rm H}{\bf F}_j{\bf C}_j{\bf C}_j^{-1}{ {\bf O}}_j{\tilde {\bf V}}_j={\bf 0} \\
 & \!\! \Rightarrow & \!\! {\tilde {\bf U}}_{-j}^{\rm H}{\bf H}_{j}{\bf C}_j^{-1}{ {\bf O}}_j{\tilde {\bf V}}_j={\bf 0}.
\end{eqnarray}
This indicates that IA is achieved over the real channel by using ${\bf C}_j^{-1}{ {\bf O}}_j{\tilde {\bf V}}_j$ as precoder and ${\tilde {\bf U}}_j$ as the projection filter at user $j$. 
Assuming that ${\tilde {\bf V}}_j$ is transmitted from the processing node back to BS $j$, and that ${{\bf O}}_j$ is known at BS $j$ since the reconstruction codebook of the processing node is known, the BS is in a position to compute the precoder ${\bf C}_j^{-1}{ {\bf O}}_j{\tilde {\bf V}}_j$.
\end{proof}
Note that the feedback of ${\tilde {\bf V}}_j$ from the processing node to BS $j$ also takes the form of a point on ${\mathcal G}_{M,d}$, and will be analyzed in further detail in the sequel.\\

\section{CSIT sharing over finite capacity links}
\label{sec:LF}

In this section, using the Grassmannian representation outlined in the previous section, we explore several scenarios where CSI is quantized and exchanged between the nodes over finite capacity links.
Three different scenarios regarding the CSIT sharing problem can be considered:\\
\indent I. The IA processing node is a separate central node that computes and distributes the IA precoders to the $K$ BSs,\\
\indent II. One BS also acts as the IA processing node, \\
\indent III. Each BS receives all the required CSI and independently computes the IA precoders.

In scenario I (Fig. \ref{fig:pic}(a)), the CSI (in the form of ${\bf F}_j$) is quantized yielding ${\hat {\bf F}}_j$ and sent to the central node. The central node computes the precoders and provides BS $j$ a quantized version ${\hat {\bf V}}_j$ of ${\tilde {\bf V}}_j$. Here we assume that each BS uses $N_b$ bits to quantize ${\bf F}_j$ and the central node uses $N_c$ bits to quantize ${\tilde {\bf V}}_j$. Therefore, the total number of bits exchanged over the network for scenario~I is equal to $K(N_b+N_c)$.  Scenario II can be considered as a particular example of scenario~I where one (bi-directional) BS-central node link is saved; the number of bits to be transferred in the network is $(K-1)(N_b+N_c)$. In scenario~III (Fig. \ref{fig:pic}(b)), the IA solution is computed independently at each BS, requiring global CSI at each BS. Therefore each BS needs to quantize and send its local CSI to all other $K-1$ BSs. The precoders are designed at the BSs and no further information 
 exchange is required. For simplicity of the exposition, we focus on scenario~I and characterize the scaling of $N_b$ and $N_c$ with $P$, noting that a generalization of the analysis to scenarios II and III is straightforward.

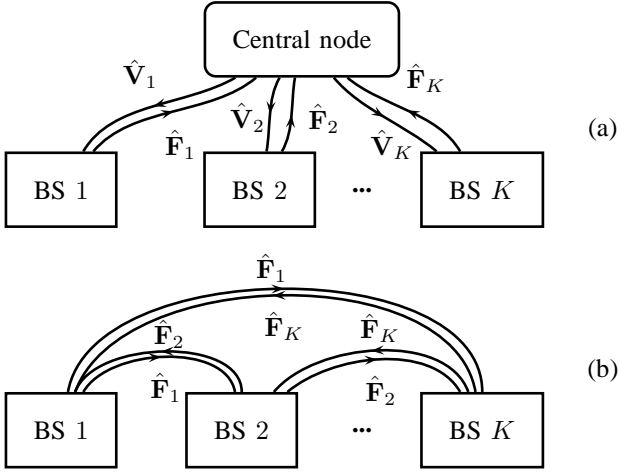
\begin{figure}
\psset{unit=0.80cm}
\psset{gridcolor=green, subgridcolor=yellow}
\begin{centering}
\begin{pspicture}(9,8)
\psset{linewidth=1pt}

\rput(1,5){\rnode{cn1}{\psframebox[fillstyle=solid, fillcolor=white, framesep=10pt]{BS $1$}}}
\rput(4.3,5){\rnode{cn2}{\psframebox[fillstyle=solid, fillcolor=white, framesep=10pt]{BS $2$}}} 
\rput(6,5){\text {\bf ...}}
\rput(8,5){\rnode{cnK}{\psframebox[fillstyle=solid, fillcolor=white, framesep=10pt]{BS $K$}}}

\rput(5,7.5){\rnode{cnB}{\psframebox[fillstyle=solid, fillcolor=white,framearc=.3, framesep=10pt]{Central node}}}

\nccurve[ArrowInside=->,ArrowInsidePos=0.5,angleA=50,angleB=220]{cn1}{cnB}
\bput(0.5){${\hat {\bf F}}_1$}

\nccurve[ArrowInside=->,ArrowInsidePos=0.5,angleA=60,angleB=-100]{cn2}{cnB}
\bput(0.5){${\hat {\bf F}}_2$}

\nccurve[ArrowInside=->,ArrowInsidePos=0.5,angleA=120,angleB=-40]{cnK}{cnB}
\bput(0.5){${\hat {\bf F}}_K$}

\nccurve[ArrowInside=->,ArrowInsidePos=0.5,angleA=210,angleB=60]{cnB}{cn1}
\bput(0.5){${\hat {\bf V}}_1$}

\nccurve[ArrowInside=->,ArrowInsidePos=0.5,angleA=-120,angleB=80]{cnB}{cn2}
\rput(4.1,6.2){${\hat {\bf V}}_2$}

\nccurve[ArrowInside=->,ArrowInsidePos=0.5,angleA=-50,angleB=140]{cnB}{cnK}
\rput(6.5,5.75){${\hat {\bf V}}_K$}

\rput(10,6){\text {(a)}}
\rput(10,2){\text {(b)}}


\rput(1,1){\rnode{cn1}{\psframebox[fillstyle=solid, fillcolor=white, framesep=10pt]{BS $1$}}}
\rput(4,1){\rnode{cn2}{\psframebox[fillstyle=solid, fillcolor=white, framesep=10pt]{BS $2$}}} 
\rput(6,1){\text {\bf ...}}
\rput(8,1){\rnode{cnK}{\psframebox[fillstyle=solid, fillcolor=white, framesep=10pt]{BS $K$}}}

\nccurve[ArrowInside=->,ArrowInsidePos=0.5,angleA=60,angleB=100]{cn1}{cn2}
\bput(0.5){${\hat {\bf F}}_1$}

\nccurve[ArrowInside=->,ArrowInsidePos=0.5,angleA=40,angleB=120]{cn2}{cnK}
\bput(0.5){${\hat {\bf F}}_2$}

\nccurve[ArrowInside=->,ArrowInsidePos=0.5,angleA=100,angleB=70]{cnK}{cn1}
\aput(0.5){${\hat {\bf F}}_K$}

\nccurve[ArrowInside=->,ArrowInsidePos=0.5,angleA=90,angleB=70]{cn2}{cn1}
\rput(2.8,2.6){${\hat {\bf F}}_2$}

\nccurve[ArrowInside=->,ArrowInsidePos=0.5,angleA=110,angleB=50]{cnK}{cn2}
\rput(6.3,2.7){${\hat {\bf F}}_K$}

\nccurve[ArrowInside=->,ArrowInsidePos=0.5,angleA=80,angleB=90]{cn1}{cnK}
\rput(4.5,3.7){${\hat {\bf F}}_1$}
\end{pspicture}
\end{centering}
\caption{CSIT sharing, with (a) and without (b) central node.}
  \label{fig:pic}
\end{figure}
Let us first consider the feedback from a BS to the central node. BS $j$ performs the QR decomposition ${\bf H}_j={\bf F}_j{\bf C}_j$ and quantizes the subspace spanned by the columns of ${\bf F}_j$ using $N_b$ bits and sends the index of the quantized codeword to the central node. We further assume that the BSs and the central node share a predefined codebook ${\mathcal S}=\{{\bf S}_1,...,{\bf S}_{2^{N_b}}\}$ which is composed of $2^{N_b}$ truncated unitary matrices of size $(K-1)N\times M$ and is designed using Grassmannian subspace packing. For simplicity, let us assume that all $K$ codebooks have the same size and the powers of the transmitted signals and receiver noise are symmetric across the network. The quantized codeword is the closest point in $\mathcal S$ w.r.t. the chordal distance, i.e.,  
\begin{equation}\label{edist}
{\hat {\bf F}}_j=\mathrm{arg} \min_{{\bf S} \in {\mathcal S}} \,\,\,d_{c}({\bf S},{\bf F}_j)
\end{equation}
in which $d_c({\bf X},{\bf Y})=\frac{1}{\sqrt{2}}\left|\left| {\bf X}{\bf X}^H - {\bf Y}{\bf Y}^H\right|\right|_{\mathrm{F}}$ is the chordal distance between two points in $ {\mathcal G}_{(K-1)N,M}$ represented by truncated unitary matrices ${\bf X}$ and ${\bf Y} $ \cite{rajathesis}. The interference alignment problem is then solved at the central node based on $\{{\hat {\bf F}}_j\}_{j=1}^K$ to find $(\{{\tilde {\bf U}}_j\}_{j=1}^K,\{{\tilde {\bf V}}_j\}_{j=1}^K)$ fulfilling
\begin{equation}\begin{split}\label{e8}
{\tilde {\bf U}}_{-j}^{\rm H}{\hat {\bf F}}_j{\tilde {\bf V}}_j={\bf 0}, \,\,\,\forall j \in \{1,...,K\}.   
\end{split}\end{equation}
We now consider the feedback of $\tilde {\bf V}_j$ from the central node to BS $j$. Using another codebook ${\mathcal T}=\{{\bf T}_1,...,{\bf T}_{2^{N_c}}\}$ of truncated unitary matrices representing points in $ {\mathcal G}_{M,d}$, the central node quantizes the alignment precoder $\tilde {\bf V}_j$ for each BS on ${\mathcal G}_{M,d}$ according to
\begin{equation}\label{edist2}
{\hat {\bf V}}_j=\mathrm{arg} \min_{{\bf T} \in {\mathcal T}} \,\,\,d_{c}({\bf T},{\tilde {\bf V}}_j),
\end{equation}  
and sends the corresponding index to BS $j$. At BS $j$, 
 we define the total precoder as ${\bf V}_j={\bf C}_j^{-1}{\bf F}_j^{\rm H}{\hat {\bf F}}_j{\hat {\bf V}}_j$, by analogy to the perfect CSI case (Lemma~\ref{lemma_Grassmannian_feedback}). 
Using the precoders ${\bf V}_j$ and after applying the receive filter ${\tilde {\bf U}}_i$ to \eqref{E1}, the interference leakage (due to the quantizations \eqref{edist} and \eqref{edist2}) at user $i$ is defined as 
\begin{equation}\begin{split}
{\bf e}_i= \sum_{ \substack{1\leq j\leq K\\ j \ne i}}{{\tilde {\bf U}}_i^{\rm H}{{{{\bf H}}}_{ij}}{{{{\bf {\bf V}}}}_j}{{\bf{x}}_j}}. 
\end{split}\end{equation}
We denote the leakage power at user $i$ by $L_i={\rm tr}({\rm E}({\bf e}_i{\bf e}_i^{\rm H}))={\rm tr}(\frac{P}{d}{\bf Q}_I^i)$,
 where ${\bf Q}_I^i=\sum_{j= 1, j\neq i}^K {\tilde {\bf U}}_i^{\rm H}{{{{\bf H}}}_{ij}}{{{{\bf {\bf V}}}}_j}{{{{\bf {\bf V}}}}_j^{\rm H}}{{{{\bf H}}}_{ij}^{\rm H}}{\tilde {\bf U}}_i$.
We now consider the sum over all users of the leakage powers:
\begin{equation}\begin{split}\label{e9}
L&=\sum_{i=1}^{K}{\rm tr}\bigg (\frac{P}{d}\sum\limits_{j= 1, j\neq i}^K {\tilde {\bf U}}_i^{\rm H}{{{{\bf H}}}_{ij}}{{{{\bf {\bf V}}}}_j}{{{{\bf {\bf V}}}}_j^{\rm H}}{{{{\bf H}}}_{ij}^{\rm H}}{\tilde {\bf U}}_i\bigg )\\
&= \sum_{j=1}^{K}\frac{P}{d}||{\tilde {\bf U}}_{-j}^{\rm H}{\bf H}_j{\bf V}_j ||_{\rm F}^2.\\
\end{split}\end{equation}
Substituting ${\bf V}_j={\bf C}_j^{-1}{\bf F}_j^{\rm H}{\hat {\bf F}}_j{\hat {\bf V}}_j$ and ${\bf H}_j={\bf F}_j{\bf C}_j$ gives
\begin{equation}\begin{split}\label{e99}
||{\tilde {\bf U}}_{-j}^{\rm H}{\bf H}_j{\bf V}_j ||_{\rm F}^2=||{\tilde {\bf U}}_{-j}^{\rm H}{\bf F}_j{\bf F}_j^{\rm H}{\hat {\bf F}}_j{\hat {\bf V}}_j ||_{\rm F}^2.
\end{split}\end{equation}
From \eqref{e8} we have ${\tilde {\bf U}}_{-j}^{\rm H}{\hat {\bf F}}_j{\tilde {\bf V}}_j{\tilde {\bf V}}_j^{\rm H}{\hat {\bf V}}_j={\bf 0}$, therefore by some manipulations, from \eqref{e9}, \eqref{e99} we get
\begin{equation}\begin{split}\label{e799}
L=\sum_{j=1}^K\frac{P}{d}||{\bf X}_j^b+{\bf X}_j^c||_{\rm F}^2  \leq \sum_{j=1}^K\frac{P}{d}(||{\bf X}_j^b||_{\rm F}+||{\bf X}_j^c||_{\rm F})^2
\end{split}\end{equation}
where
\begin{equation}\begin{split}
{\bf X}_j^b&={\tilde {\bf U}}_{-j}^{\rm H}({ {\bf F}}_j{ {\bf F}}_j^{\rm H}-{\hat {\bf F}}_j{\hat {\bf F}}_j^{\rm H}){\hat {\bf F}}_j{\hat {\bf V}}_j \,\,\,\,\, {\rm and}\\
{\bf X}_j^c&={\tilde {\bf U}}_{-j}^{\rm H}{\hat {\bf F}}_j({\hat {\bf V}}_j{\hat {\bf V}}_j^{\rm H}-{\tilde {\bf V}}_j{\tilde {\bf V}}_j^{\rm H}){\hat {\bf V}}_j \, . 
\end{split}\end{equation}
Using the fact that all the matrices involved in ${\bf X}_j^b$ and ${\bf X}_j^c$ are truncated unitary, it can be shown that $||{\bf X}_j^b||_{\rm F} \leq \sqrt{2d} \, d_c({{\bf F}}_j,{\hat {\bf F}}_j)$ and $||{\bf X}_j^c||_{\rm F} \leq \sqrt{2d}  \, d_c({\tilde {\bf V}}_j,{\hat {\bf V}}_j)$.
Using bounds on the quantization error for codebooks designed by sphere packing, it can be shown \cite{itw} that $L$ in \eqref{e799} is upper bounded by a constant $c_0$ independent of $P$ when
\begin{equation}\label{ina2ta}
N_b=\frac{G_{b}}{2}{\rm log} P \quad \mathrm{and} \quad N_c=\frac{G_{c}}{2}{\rm log} P, 
\end{equation}
in which $G_{b}=2M((K-1)N-M)$ and $G_{c}=2d(M-d)$ are the real dimension of $ {\mathcal G}_{(K-1)N,M}$ and $ {\mathcal G}_{M,d}$ respectively.
Under the conditions \eqref{ina2ta}, it is clear that the leakage power at every receiver would be bounded by a constant since $L_i \leq L$. 

In order to establish the DoF achievable using the proposed CSI quantization scheme, we provide a lower bound for the achievable rate. 
First consider the following lemma:
\begin{lemma}\label{thdof}
For $N_{b}$ and $N_{c}$ according to \eqref{ina2ta} we have,
\begin{equation}\label{23eq}
{\lim_{P \rightarrow \infty} \frac{\log \left| {{{\bf I}_{{d}}} + \frac{P}{d}{\bf Q}_{\rm S}^i} \right|}{\log P}}=d,
\end{equation}
with ${\bf Q}_{\rm S}^i={\bf U}_i^{\rm H}{\bf H}_{ii}{\bf V}_{i}{\bf V}_{i}^{\rm H}{\bf H}_{ii}^{\rm H}{\bf U}_i$, almost surely.
\end{lemma}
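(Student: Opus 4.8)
The plan is to sandwich $\log| {\bf I}_{d} + \frac{P}{d}{\bf Q}_{\rm S}^i |$ between two quantities that both grow like $d\log P$. For the upper bound I would bound each of the $d$ eigenvalues of ${\bf Q}_{\rm S}^i$ by $\lambda_{\rm max}({\bf Q}_{\rm S}^i)$, giving $\log| {\bf I}_{d} + \frac{P}{d}{\bf Q}_{\rm S}^i | \le d\log(1 + \frac{P}{d}\lambda_{\rm max}({\bf Q}_{\rm S}^i))$, and then observe that $\lambda_{\rm max}({\bf Q}_{\rm S}^i)$ is bounded by a constant independent of $P$, almost surely. Indeed, in ${\bf Q}_{\rm S}^i = {\bf U}_i^{\rm H}{\bf H}_{ii}{\bf V}_i{\bf V}_i^{\rm H}{\bf H}_{ii}^{\rm H}{\bf U}_i$ with ${\bf V}_i = {\bf C}_i^{-1}{\bf F}_i^{\rm H}{\hat {\bf F}}_i{\hat {\bf V}}_i$, the matrices ${\bf U}_i,{\bf F}_i,{\hat {\bf F}}_i,{\hat {\bf V}}_i$ are truncated unitary, ${\bf H}_{ii}$ is a fixed generic matrix of finite Frobenius norm, and $\|{\bf C}_i^{-1}\|_{\rm F}$ is finite because ${\bf H}_{i}={\bf F}_{i}{\bf C}_{i}$ has full column rank $M$ for generic channels; hence $\lambda_{\rm max}({\bf Q}_{\rm S}^i) \le \|{\bf H}_{ii}\|_{\rm F}^2\,\|{\bf C}_i^{-1}\|_{\rm F}^2$. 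This already yields $\limsup_{P\to\infty}\log| {\bf I}_{d} + \frac{P}{d}{\bf Q}_{\rm S}^i |/\log P \le d$.

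For the lower bound I would use $| {\bf I}_{d} + \frac{P}{d}{\bf Q}_{\rm S}^i | \ge (\frac{P}{d})^{d}\det{\bf Q}_{\rm S}^i$, which holds because ${\bf Q}_{\rm S}^i$ is positive semidefinite; it then suffices to show $\liminf_{P\to\infty}\det{\bf Q}_{\rm S}^i > 0$ almost surely, since together with the upper bound this pins the limit at exactly $d$. Writing $\det{\bf Q}_{\rm S}^i = |\det({\bf U}_i^{\rm H}{\bf H}_{ii}{\bf V}_i)|^2$, the task is to show that the effective direct channel ${\bf U}_i^{\rm H}{\bf H}_{ii}{\bf V}_i$ stays full rank with its smallest singular value bounded away from $0$ as $P\to\infty$.

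To establish this, note that $N_b\to\infty$ as $P\to\infty$ (the assumption $(K-1)N>M$ makes $G_b>0$ in \eqref{ina2ta}), so the cover radius of the Grassmannian packing codebook ${\mathcal S}$ vanishes and $d_c({\bf F}_j,{\hat {\bf F}}_j)\to0$, and likewise $d_c({\tilde {\bf V}}_j,{\hat {\bf V}}_j)\to0$, deterministically for all $j$. From the identity $d_c({\bf F}_j,{\hat {\bf F}}_j)^2 = M - \|{\bf F}_j^{\rm H}{\hat {\bf F}}_j\|_{\rm F}^2$ this forces $\sigma_{\min}({\bf F}_j^{\rm H}{\hat {\bf F}}_j)\to1$, so ${\bf V}_i = {\bf C}_i^{-1}{\bf F}_i^{\rm H}{\hat {\bf F}}_i{\hat {\bf V}}_i$ keeps $\sigma_d({\bf V}_i) \ge \sigma_{\min}({\bf C}_i^{-1})\,\sigma_{\min}({\bf F}_i^{\rm H}{\hat {\bf F}}_i)\to\sigma_{\min}({\bf C}_i^{-1})>0$ (using that ${\hat {\bf V}}_i$ is truncated unitary and that ${\bf C}_i^{-1}$ does not depend on $P$), while ${\bf U}_i$ stays truncated unitary. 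Hence $({\bf U}_i,{\bf V}_i)$ remains in a compact set of full-rank-$d$ pairs. Passing to the limit in \eqref{e8} shows that every accumulation point of the designed $({\tilde {\bf U}}_j,{\tilde {\bf V}}_j)$ is a perfect-CSI IA solution in the sense of \eqref{inoe} for the true cross-channels; for a generic feasible (proper) instance the IA solution set is finite, so the pairs $({\bf U}_i,{\bf V}_i)$ stay within a vanishing distance of a finite collection $\{({\bf U}_i^{(m)},{\bf V}_i^{(m)})\}_{m}$ of full-rank-$d$ matrices depending only on the cross-channels. Since ${\bf H}_{ii}$ is drawn independently of the cross-channels, for each $m$ the polynomial ${\bf H}_{ii}\mapsto\det(({\bf U}_i^{(m)})^{\rm H}{\bf H}_{ii}{\bf V}_i^{(m)})$ is not identically zero (the linear map ${\bf H}_{ii}\mapsto({\bf U}_i^{(m)})^{\rm H}{\bf H}_{ii}{\bf V}_i^{(m)}$ is onto $\mathbb{C}^{d\times d}$ because ${\bf U}_i^{(m)}$ and ${\bf V}_i^{(m)}$ have full column rank $d$), hence nonzero almost surely, and a union over the finitely many $m$ gives $\min_{m}|\det(({\bf U}_i^{(m)})^{\rm H}{\bf H}_{ii}{\bf V}_i^{(m)})|^2>0$ almost surely; by continuity $\liminf_{P\to\infty}\det{\bf Q}_{\rm S}^i$ is bounded below by this positive quantity.

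The main obstacle is the control of the accumulation points of the designed precoders and filters: IA solutions are not unique and the channel-to-solution map is only piecewise continuous, so reducing to a finite set $\{({\bf U}_i^{(m)},{\bf V}_i^{(m)})\}_{m}$ relies on genericity and properness of the IA instance (finiteness of the solution set, and continuous persistence of each solution under small channel perturbations via the implicit function theorem applied to the polynomial IA system). In the non-tight case one would instead need a dimension-counting argument to verify that the measure-zero singular locus in ${\bf H}_{ii}$-space is not enlarged to positive measure when unioned over a positive-dimensional accumulation set. A minor bookkeeping point is that the definition ${\bf V}_j = {\bf C}_j^{-1}{\bf F}_j^{\rm H}{\hat {\bf F}}_j{\hat {\bf V}}_j$ differs from the truncated-unitary precoder used in the transmit-power constraint only by a bounded, $P$-independent rescaling, which does not affect the pre-log.
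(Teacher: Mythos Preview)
Your proposal is correct and follows essentially the same compactness-based route sketched in the paper: the paper's (omitted) argument also hinges on the fact that $({\bf U}_i,{\bf V}_i)$ live in a compact set and that any limit point yields a full-rank ${\bf Q}_{\rm S}^i$ almost surely. Your write-up is considerably more explicit than the paper's sketch --- you separate the easy upper bound from the lower bound, identify $\det{\bf Q}_{\rm S}^i$ as the key quantity, and spell out why independence of ${\bf H}_{ii}$ from the cross-links makes the effective direct channel full rank for each limiting IA solution.

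One framing difference worth flagging: the paper's sketch speaks of \emph{exhibiting} a sequence of codebooks for which ${\bf Q}_{\rm S}^i$ actually converges (and is full rank in the limit), whereas you argue that \emph{every} accumulation point lies in the finite perfect-CSI IA solution set and then take a minimum over that set. Your version is slightly stronger but, as you correctly note, leans on genericity/properness to ensure the IA solution set is finite; the paper's existence framing sidesteps that by only needing one convergent subsequence. Both approaches buy the same conclusion, and your acknowledgment of the main obstacle (discreteness of the solution set and continuity of the channel-to-solution map) is exactly the point where the work lies.
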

\begin{proof} 
Note that the limit in \eqref{23eq} involves codebooks of increasing size since $N_{b}$ and $N_{c}$ increase with $P$. ${\bf Q}_{\rm S}^i$ does not necessarily admit a limit when $P \rightarrow \infty$ due to the fact that ${\bf U}_i$ and ${\bf V}_{i}$ are functions of the codebook. We tackle this problem by resorting to an argument based on the compactness of the solution space, and show that there exists a series of codebooks of increasing size for which ${\bf Q}_{\rm S}^i$ admits a limit and is full rank a.s. The full proof is similar to the proof of Theorem 2 in \cite{prep}, and is omitted due to space constraints.
\end{proof}

We are now in the position of proving that the proposed method achieves the full IA DoF:
\begin{theorem}\label{thm_DoF}
The proposed quantization scheme, with $N_{b}$ and $N_{c}$ according to \eqref{ina2ta}, achieves the same DoF as IA under perfect CSI.
\end{theorem}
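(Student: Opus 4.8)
The plan is to show that, for each user $i$, the proposed scheme admits an achievable rate $R_i$ with $\lim_{P\to\infty} R_i/\log P=d$; since a linear receiver recovering $d$ streams cannot carry more than $d$ DoF per user, summing over the $K$ users then yields exactly the total DoF $Kd$ of perfect-CSI IA, which is the claim.

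First I would fix the precoders ${\bf V}_j={\bf C}_j^{-1}{\bf F}_j^{\rm H}{\hat {\bf F}}_j{\hat {\bf V}}_j$ and the projection filters ${\tilde {\bf U}}_i$ produced by the quantization scheme, take ${\bf x}_j\sim\mathcal{CN}({\bf 0},\tfrac{P}{d}{\bf I}_d)$ i.i.d., and apply ${\tilde {\bf U}}_i$ to \eqref{E1}. Because the residual interference ${\bf e}_i$ is a linear combination of the Gaussian vectors $\{{\bf x}_j\}_{j\ne i}$, it is itself Gaussian and independent of ${\bf x}_i$; treating it as noise and using ${\tilde {\bf U}}_i^{\rm H}{\tilde {\bf U}}_i={\bf I}_d$ (truncated unitary filters), the achievable rate at user $i$ is
\begin{equation*}
R_i=\log\left|{\bf I}_d+\tfrac{P}{d}{\bf Q}_{\rm S}^i+\tfrac{P}{d}{\bf Q}_I^i\right|-\log\left|{\bf I}_d+\tfrac{P}{d}{\bf Q}_I^i\right| ,
\end{equation*}
with ${\bf Q}_{\rm S}^i$ as in Lemma~\ref{thdof} and ${\bf Q}_I^i$ the interference covariance introduced above.

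Next I would bound the two terms separately. For the interference term, the inequality $\log|{\bf I}+{\bf M}|\le{\rm tr}({\bf M})$ valid for ${\bf M}\succeq{\bf 0}$ gives $\log|{\bf I}_d+\tfrac{P}{d}{\bf Q}_I^i|\le{\rm tr}(\tfrac{P}{d}{\bf Q}_I^i)=L_i\le L\le c_0$, a constant independent of $P$ under \eqref{ina2ta} by the leakage bound \eqref{e799}; hence this term contributes nothing to the pre-log. For the signal-plus-interference term, ${\bf Q}_I^i\succeq{\bf 0}$ together with monotonicity of the determinant on the positive semidefinite cone gives $\log|{\bf I}_d+\tfrac{P}{d}{\bf Q}_{\rm S}^i+\tfrac{P}{d}{\bf Q}_I^i|\ge\log|{\bf I}_d+\tfrac{P}{d}{\bf Q}_{\rm S}^i|$, whose pre-log equals $d$ by Lemma~\ref{thdof}. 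Combining the two bounds yields $\liminf_{P\to\infty}R_i/\log P\ge d$; together with the genie-aided upper bound $R_i\le\log|{\bf I}_N+\tfrac{P}{d}{\bf H}_{ii}{\bf V}_i{\bf V}_i^{\rm H}{\bf H}_{ii}^{\rm H}|\le d\log P+O(1)$, which holds because ${\bf V}_i$ has only $d$ columns, this forces $R_i/\log P\to d$. Summing over $i$ then gives total DoF $Kd$, matching perfect-CSI IA.

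The only delicate point is that ${\bf Q}_{\rm S}^i$, ${\bf Q}_I^i$, ${\bf V}_i$ and ${\tilde {\bf U}}_i$ all depend on $P$ through the codebooks, whose sizes grow with $P$, so none of these matrices need converge as $P\to\infty$; this is precisely the obstacle already isolated and resolved in Lemma~\ref{thdof} by a compactness argument on the solution space, which I would simply invoke here. Everything else in the argument is a chain of inequalities valid at each fixed $P$ followed by a $\liminf$, so I do not anticipate any further difficulty.
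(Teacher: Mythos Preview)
Your proof is correct and follows essentially the same route as the paper: write the achievable rate as $\log|{\bf I}_d+\tfrac{P}{d}({\bf Q}_{\rm S}^i+{\bf Q}_I^i)|-\log|{\bf I}_d+\tfrac{P}{d}{\bf Q}_I^i|$, drop ${\bf Q}_I^i$ from the first term by PSD monotonicity, bound the second term by a constant via the leakage bound $L_i\le c_0$, and invoke Lemma~\ref{thdof}. The only cosmetic differences are that the paper bounds the interference log-det via $d\log(1+\lambda_{\max})\le d\log(1+{\rm tr})$ rather than your direct $\log|{\bf I}+{\bf M}|\le{\rm tr}({\bf M})$, and that you additionally supply the matching upper bound on the per-user DoF, which the paper leaves implicit.
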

\begin{proof}
Recall that \eqref{ina2ta} ensures that $L_i \leq c_0$. Therefore,
$\lambda_{\rm max} \left(\frac{P}{d} {\bf Q}_I^i\right) \leq {\rm tr}\left(\frac{P}{d} {\bf Q}_I^i\right)=L_i \leq c_0$, which yields
\begin{equation}\label{inkh}
\log \left| {{{\bf I}_{{d}}} + \frac{P}{d}{\bf Q}_{\rm I}^i} \right| \leq d\log \left(1+\lambda_{\rm max} \left(\frac{P}{d} {\bf Q}_I^i\right)\right) \leq d\log (1+c_0).
\end{equation}
Hence, the achievable rate using the designed precoders and receive filters can be lower-bounded as follows, 
 \begin{eqnarray}
R_q^i&=&\log \left| {{{\bf I}_{{d}}} + \frac{P}{d}({\bf Q}_{\rm S}^i+{\bf Q}_{\rm I}^i)} \right|-\log \left| {{{\bf I}_{{d}}} + \frac{P}{d}{\bf Q}_{\rm I}^i} \right|\\
 &\geq& \log \left| {{{\bf I}_{{d}}} + \frac{P}{d}{\bf Q}_{\rm S}^i} \right|-\log \left| {{{\bf I}_{{d}}} + \frac{P}{d}{\bf Q}_{\rm I}^i} \right|  \label{eq_QI_psd}\\
 &\geq& \log \left| {{{\bf I}_{{d}}} + \frac{P}{d}{\bf Q}_{\rm S}^i} \right|-d\log (1+c_0), \label{in23eq}
\end{eqnarray}
where \eqref{eq_QI_psd} follows from the fact that ${\bf Q}_{\rm I}^i$ is positive semi-definite and the second inequality follows from \eqref{inkh}.
Combining \eqref{in23eq} with Lemma~\ref{thdof} brings us to the conclusion that $\lim_{P \rightarrow \infty} \frac{R_q^i}{\log P} \geq d$, i.e. the full DoF is achieved.
\end{proof} 
\mbox{}\\*

\section{Simulation Results and discussion}
\label{sec:simulation}

\subsection{Performance results using RVQ}
In this section, the performance of the proposed scheme is evaluated through numerical simulations. The performance metric is the sum-rate evaluated through Monte-Carlo simulations employing truncated unitary precoders.
A three-user IC is considered where each BS is equipped with $M=5$ antennas while every receiver has $N=3$ antennas and $d=2$ data streams for each user is considered. Entries of the channel matrices are generated according to $\mathcal{CN}(0,1)$ and the performance results are averaged over the channel realizations.
In in Fig.~\ref{fig:sinr}, the quantized CSI feedback method of Section~\ref{sec:LF} (denoted by ``Proposed'') is compared (for scenario I) to the naive method where the interfering channel matrices from the BSs are independently vectorized, normalized and quantized using $N_b$ bits based on the idea of composite Grassmann manifold \cite{rajac} and finally the indices of the quantized vectors are sent to the central node (denoted by Normalized Channel Composite Grassmann Quantization, NC-CGQ). At the central node, in both cases, each precoder is vectorized, normalized and quantized on ${\mathcal G}_{Md,1}$ using $N_c$ bits, and sent to the corresponding BS.
Figure~\ref{fig:sinr} shows the achievable sum-rate versus transmit SNR ($P$) for $(N_b, N_c)=(5,6)$ and $(N_b, N_c)=(10,12)$ bits. A random codebook is used with codebook entries chosen as independent truncated unitary matrices generated from the Haar distribution. For the independent quantization method, random unit norm vectors are used in the codebook construction. Clearly the proposed scheme outperforms the independent quantization method for the same number of bits.
%
%
\begin{figure}
  \centering
 \resizebox{9cm}{!}{ \begin{tikzpicture}[scale=1]
    \renewcommand{\axisdefaulttryminticks}{4}
    \pgfplotsset{every major grid/.append style={densely dashed},every mark/.append style={solid}}
    \tikzstyle{every axis y label}+=[yshift=-10pt]
    \tikzstyle{every axis x label}+=[yshift=5pt]
    \pgfplotsset{every axis legend/.append style={cells={anchor=west},fill=white, at={(1.15,1.05)}, anchor=north east,{font=\scriptsize}}}
    \begin{axis}[ 
      xmin=0,
      ymin=3,
      xmax=30,
      ymax=25.3,
      bar width=3pt,
      grid=major,
      scaled ticks=true,
      ylabel={Sum-rate [bits/s/channel use]},
      xlabel={{\rm SNR} ($P$) [dB]}
      ]

     \addplot[black,smooth,mark=triangle] plot coordinates{(0,5.7829)    (5,10.6166)    (10,17.2227)   (15,25.2484)          } ;
    
     \addplot[blue,smooth,mark=diamond] plot coordinates{(0,5.2237)    (5,8.7412)    (10,12.3339)    (15,15.1545)    (20,16.8856)   (25,17.7469)   (30,18.1105)   (35,18.2459)   (40,18.2921)};

     \addplot[red,dashed,mark=o,every mark/.append style={solid}] plot coordinates{(0,4.0831)    (5,5.8942)    (10,7.2240)    (15,7.9596)    (20,8.2895)    (25,8.4183)    (30,8.4639)    (35,8.4791)    (40,8.4840)};

    \addplot[blue,smooth,mark=x] plot coordinates{(0,5.0770)    (5,8.1657)    (10,11.0892)    (15,13.2172)    (20,14.4194)    (25,14.9678)    (30,15.1820)   (35,15.2572)    (40,15.2821)};

     \addplot[red,dashed] plot coordinates{(0,3.9475)    (5,5.6988)    (10,6.9805)    (15,7.6772)    (20,7.9768)    (25,8.0869)    (30,8.1240)  (35,8.1359)    (40,8.1397)};

 \legend{ {Perfect CSI}, {$N_b=10, N_c=12$, Proposed},{$N_b=10, N_c=12$, NC-CGQ},{$N_b=5, N_c=6$, Proposed}, {$N_b=5, N_c=6$, NC-CGQ}}
    \end{axis}
  \end{tikzpicture} }
  \caption{Sum-rate comparison of quantization methods, for the $3$-user MIMO IC, $M=5$, $N=3$, $d=2$.}
  \label{fig:sinr}
\end{figure}
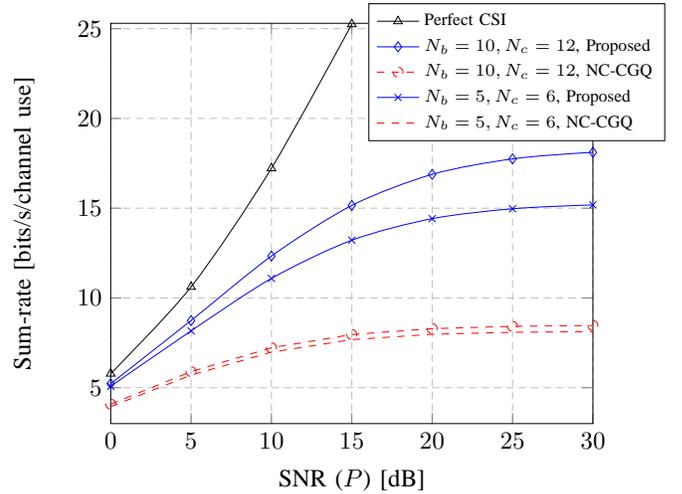
\subsection{Perturbations on the Grassmann manifold}
\label{GMperturbation}
The use of random codebooks for large values of $N_b$ and $N_c$ is not tractable, due to the exponential requirements in terms of storage and of computation of \eqref{edist} and \eqref{edist2}.
In order to benchmark the sum-rate achievable under the proposed scheme for the high power region (large $N_b$ and $N_c$) we replace the quantization process with a perturbation which approximates the quantization error. As will be seen, this approach provides a good approximation of the effect of quantization on the considered system. We now detail the proposed perturbation technique. 

Let us consider a point on ${\mathcal G}_{n,p}$, represented by a $n \times p$ truncated unitary matrix $\bf F$.  Here, we assume that $n\geq2p$, since it is otherwise more efficient to consider the left null space of $\bf F$ instead.
Since the columns of $\bf F$ are orthonormal, they can be completed to form an orthonormal basis of the $n$-dimensional space.
In fact, according to \cite{Barg}, any other point on ${\mathcal G}_{n,p}$ can be represented in the basis constituted by the columns of the unitary matrix ${\bf W} = \left[{\bf F} \,\, {\bf F}^{\rm c}\right]$ as
\begin{equation}\label{eq_Fbar}
{\bf \bar F}={\bf W} \begin{bmatrix} {\bf C} \\ {\bf S} \\ {\bf 0}_{n-2p} \end{bmatrix}\!\!,
\end{equation}
for some ${\bf F}^{\rm c}$ in the left null space of $\bf F$ and  
\begin{equation*}\label{i33}
{\bf C}={\rm diag}(\cos\theta_1,\cdots,\cos\theta_p), {\bf S}={\rm diag}(\sin\theta_1,\cdots,\sin\theta_p), 
\end{equation*}
where $\theta_1,...,\theta_p$ are real angles. 
The squared chordal distance between the two points on ${\mathcal G}_{n,p}$ represented by $\bf F$ and ${\bf \bar F}$ is
$r \triangleq  d_c^2({\bf F},{\bf \bar F})=\sum_{i=1}^p \sin^2\theta_i$.
Therefore, in order to generate random perturbations of a certain chordal distance $\sqrt{r_0}$ from $\bf F$, we propose to generate random values for the angles $\theta_1,...,\theta_p$ such that $\sum_{i=1}^p \sin^2\theta_i=r_0$, and to pick a random orthonormal basis ${\bf F}^{\rm c}$ of the left null subspace of ${\bf F}$. The perturbed matrix is then computed using \eqref{eq_Fbar}. 
The histogram (not shown) of the squared quantization error obtained from an RVQ implementation suggests that the Gaussian distribution is a good approximation for the probability density function of $r$. In fact, the moments of this distribution can be obtained using the following result from \cite[Theorem~6]{rajathesis}: for asymptotically large codebook size, when using a random codebook $\mathcal C$ of size $J$ for quantizing a matrix ${\bf F}$ arbitrarily distributed over an arbitrary manifold, the $k$-th moment of the quantization error distribution, $D^{(k)}={\rm E}_{{\mathcal C},{\bf F}}(d_c^k({\hat {\bf F}},{\bf F}))$ for ${\hat {\bf F}}=\mathrm{arg} \min_{{\bf C} \in {\mathcal C}} \,\,\,d_{c}({\bf F},{\bf C})$, can be bounded as
\begin{equation}\label{in27}
\frac{G}{{{(G+k)}}{(c\, J)}^{\frac{k}{G}}} \leq D^{(k)} \leq   \frac{\Gamma ({\frac{k}{G}})}{{\frac{G}{k}}{(c\, J)}^{\frac{k}{G}}},
\end{equation} 
where $c$ and $G$ are respectively the coefficient of the ball volume and the real dimension of the corresponding manifold (here, the Grassmann manifold). 
Note that \eqref{in27} only provides bounds on $D^{(k)}$, however since both the upper and lower bounds are asymptotically tight when the codebook size increases, we arbitrarily choose to use the upper bound as an approximation of $D^{(k)}$, i.e. 
\begin{equation}
{\bar r} \triangleq  \frac{\Gamma ({\frac{2}{G}})}{{\frac{G}{2}}{(c\, J)}^{\frac{2}{G}}} \approx D^{(2)} 
\end{equation}
is the average and 
\begin{equation}
\sigma^2_r \triangleq \frac{\Gamma ({\frac{4}{G}})}{{\frac{G}{4}}{(c\, J)}^{\frac{4}{G}}} - {\bar r}^2 \approx  D^{(4)}-(D^{(2)})^2 
\end{equation}
is the variance. We propose generate the values for $r$ according to $\mathcal{N}({\bar r},\sigma^2_r)$ truncated to $\mathbb{R}^+$. This process is summarized in Algorithm~\ref{algo_perturbation}.
\begin{algorithm}[h!] 
  \caption{Generating random perturbations around ${\bf F}$} \label{algo_perturbation}
     \begin{itemize}
	     \item Draw a random realization of the squared chordal distance $r$ from $\mathcal{N}(\bar r,\sigma^2_r)$ 
	     \item If $r<0$ then generate a new sample
	     \item Generate independent $s_1,\ldots , s_p$ drawn uniformly from the interval $\left[0,1\right]$
	     \item Compute the angles $\theta_i=\sin^{-1}\left(\frac{s_i\sqrt{r}}{\sqrt{\sum_{i=1}^p s_i^2}}  \right)$
	     \item Generate a random orthonormal basis ${\bf F}^{\rm c}$ of the left null space of ${\bf F}$ and compute ${\bf \bar F}$ according to \eqref{eq_Fbar}.
     \end{itemize}
  \end{algorithm}
  
  This algorithm was used to simulate the effect of quantization taking place at BSs as well as the central node. The sum-rate performance $\sum_{i=1}^{K}R_q^i$ obtained using the perturbation method is plotted against SNR for various codebook sizes in Fig.~\ref{fig:lb}. For the considered antenna configuration, according to \eqref{ina2ta}, the scaling that is sufficient to achieve the perfect DoF is $N_b=5\log P$ and $N_c=6 \log P$. In the simulations, the codebook sizes are chosen as $N_b=5A$ and $N_c=6A$ for integer values of $A$, and the corresponding SNR is computed according to $P=2^A$. The results are also compared to perfect CSIT sharing.
\begin{figure}
  \centering
  \begin{tikzpicture}[scale=1]
    \renewcommand{\axisdefaulttryminticks}{4}
    \pgfplotsset{every major grid/.append style={densely dashed},every mark/.append style={solid}}
    \tikzstyle{every axis y label}+=[yshift=-10pt]
    \tikzstyle{every axis x label}+=[yshift=5pt]
    \pgfplotsset{every axis legend/.append style={cells={anchor=west},fill=white, at={(1,0)}, anchor=south east,{font=\scriptsize}}}
    \begin{axis}[ 
      xmin=0,
      ymin=-5,
      xmax=30,
      ymax=45,
      bar width=3pt,
      grid=major,
      scaled ticks=true,
      ylabel={Sum-rate [bits/s/channel use]},
      xlabel={ {\rm SNR ($P$)}  [dB]}
      ]

     \addplot[black,smooth,mark=diamond] plot coordinates{(0,4.9458)    (5,9.7868)    (10,16.5164)   (15,24.6617)  (20,33.6494)     (25,42.9955)    (30,52.3556)    (35,61.5125)    (40,70.3418)     } ;

     \addplot[blue,mark=triangle] plot coordinates{(0,3.1659)    (6.02,8.0277)    (9.03,11.34)    (12.04,15.48)    (15.05,20.25)  (21.07,30.05)};

     \addplot[blue,mark=square,every mark/.append style={solid}] plot coordinates{(0,4.6509)    (5,8.3443)    (10,12.1937)    (15,15.1608)    (20,16.8912)    (25,17.6821)    (30,17.9833)    (35,18.0863)    (40,18.1197)};

     \addplot[blue,mark=o,every mark/.append style={solid}] plot coordinates{(0,4.4619)    (5,7.5711)    (10,10.4417)    (15,12.3506)    (20,13.2952)    (25,13.6684)    (30,13.7976)    (35,13.8398)    (40,13.8533)};

   \addplot[red,dashed,mark=triangle,every mark/.append style={solid}] plot coordinates{(0,3.1659)    (6.02,8.34)    (9.03,11.5473)    (12.04,15.6829)    (15.05,20.4590)   (21.07,30.2489)   (27.09,42.0428)   (33.11,52.8688)   (39.13,64.2010) };
   
   \addplot[blue,dashed,mark=o,every mark/.append style={solid}] plot coordinates{(0,4.3598)    (5,7.4398)    (10,10.2799)    (15,12.1623)    (20,13.0995)    (25,13.4757)    (30,13.6080)    (35,13.6515)    (40,13.6655)};
  
   \addplot[blue,dashed,mark=square,every mark/.append style={solid}] plot coordinates{(0,4.6263)    (5,8.3045)    (10,12.0895)    (15,14.9204)    (20,16.5011)    (25,17.1939)    (30,17.4507)    (35,17.5374)    (40,17.5654)};

 \legend{{Perfect CSI}, {$N_b, N_c$ according to \eqref{ina2ta}}, {$N_b=15, N_c=18$},{$N_b=10, N_c=12$}}
    \end{axis}
  \end{tikzpicture}
  \caption{Sum rate comparison between perturbation method (dashed) and quantization (solid), for the $3$-user MIMO IC, $M=5$, $N=3$ and $d=2$.}
  \label{fig:lb}
\end{figure}
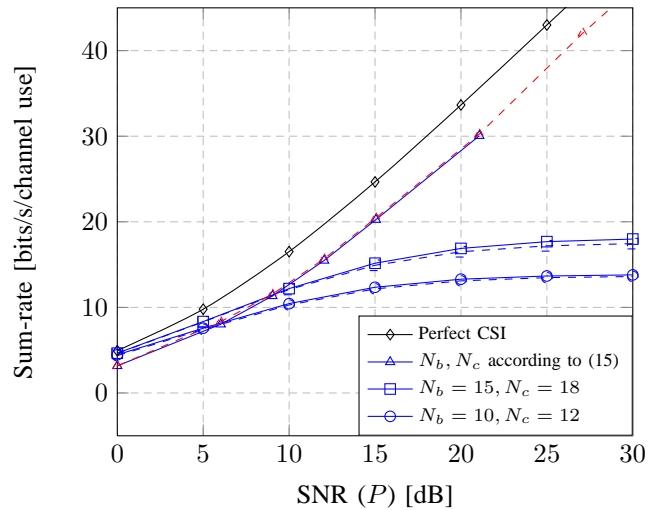
 It is clear that this perturbation method effectively approximates the quantization process when the desired performance metric is the sum-rate, allowing us to rely on the curves resulting from this method to confirm the DoF result of Theorem~\ref{thm_DoF}. 
\section{Conclusion}
\label{sec:conclusion}

A limited feedback scheme was proposed for efficient sharing of CSIT among interfering BSs in downlink interference alignment for TDD cellular systems. The growth rate of the bits to be transferred with respect to the transmit power was characterized in order to preserve the total multiplexing gain and a heuristic method was proposed to verify the achievability of multiplexing gain by simulation.
{\small 
\bibliographystyle{IEEEtran} }
%

\balance
\bibliography{mybib2}

\begin{thebibliography}{10}
\providecommand{\url}[1]{#1}
\csname url@samestyle\endcsname
\providecommand{\newblock}{\relax}
\providecommand{\bibinfo}[2]{#2}
\providecommand{\BIBentrySTDinterwordspacing}{\spaceskip=0pt\relax}
\providecommand{\BIBentryALTinterwordstretchfactor}{4}
\providecommand{\BIBentryALTinterwordspacing}{\spaceskip=\fontdimen2\font plus
\BIBentryALTinterwordstretchfactor\fontdimen3\font minus
  \fontdimen4\font\relax}
\providecommand{\BIBforeignlanguage}[2]{{%
\expandafter\ifx\csname l@#1\endcsname\relax
\typeout{** WARNING: IEEEtran.bst: No hyphenation pattern has been}%
\typeout{** loaded for the language `#1'. Using the pattern for}%
\typeout{** the default language instead.}%
\else
\language=\csname l@#1\endcsname
\fi
#2}}
\providecommand{\BIBdecl}{\relax}
\BIBdecl

\bibitem{bolski}
J.~Thukral and H.~B{\"o}lcskei, ``Interference alignment with limited
  feedback,'' in \emph{Proc. IEEE Int. Symp. Inf. Theory (ISIT)}, Seoul, Korea,
  Jun. 2009.

\bibitem{rajac}
R.~T. Krishnamachari and M.~K. Varanasi, ``Interference alignment under limited
  feedback for {MIMO} interference channels,'' in \emph{Proc. IEEE Int. Symp.
  Information Theory (ISIT)}, Austin, TX, Jun. 2010.

\bibitem{itw}
M.~Rezaee and M.~Guillaud, ``Limited feedback for interference alignment in the
  {K}-user {MIMO} interference channel,'' in \emph{Proc. IEEE Inf. Theory.
  Workshop (ITW)}, Lausanne, Switzerland, 2012.

\bibitem{Jindal}
N.~Jindal, ``{MIMO} broadcast channels with finite-rate feedback,'' \emph{IEEE
  Trans. Inf. Theory}, vol.~52, no.~11, pp. 5045--5060, Nov. 2006.

\bibitem{Wiro}
W.~Santipach and M.~L. Honig, ``Capacity of a multiple-antenna fading channel
  with a quantized precoding matrix,'' \emph{IEEE Trans. Inf. Theory}, vol.~55,
  no.~3, pp. 1218--1234, Mar. 2009.

\bibitem{ImPerCh:Guill}
R.~Tresch and M.~Guillaud, ``Cellular interference alignment with imperfect
  channel knowledge,'' in \emph{Proc. IEEE International Conference on
  Communications (ICC)}, Dresden, Germany, Jun. 2009.

\bibitem{yetisIA}
C.~M. Yetis, T.~Gou, S.~A. Jafar, and A.~H. Kayran, ``Feasibility conditions
  for interference alignment,'' in \emph{Proc. IEEE Global Telecommunications
  Conference (Globecom)}, Istanbul, Turkey, Dec. 2009.

\bibitem{Gou_Jafar_DoF_MIMO_Kuser_IC_IT2010}
T.~Gou and S.~A. Jafar, ``Degrees of freedom of the {$K$} user {$M\times N$
  MIMO} interference channel,'' \emph{IEEE Trans. Information Theory}, vol.~56,
  no.~12, pp. 6040--6057, Dec. 2010.

\bibitem{rajathesis}
R.~T. Krishnamachari, \emph{{A Geometric Framework for Analyzing the
  Performance of Multiple-Antenna Systems under Finite-Rate Feedback}}.\hskip
  1em plus 0.5em minus 0.4em\relax PhD thesis, University of Colorado, Boulder,
  2011.

\bibitem{prep}
M.~Rezaee and M.~Guillaud, ``Interference alignment with quantized
  {G}rassmannian feedback in the {K}-user constant {MIMO} interference
  channel,'' \emph{Submitted to IEEE Trans. Inf. Theory}, 2013.

\bibitem{Barg}
A.~Barg and D.~Y. Nogin, ``Bounds on packings of spheres in the {G}rassmann
  manifold,'' \emph{IEEE Trans. Inf. Theory}, vol.~48, no.~9, pp. 2450--2454,
  Sep. 2002.

\end{thebibliography}

\end{document}